\documentclass[12pt]{article}

\setlength{\topmargin}{1in}
\setlength{\textheight}{9in}
\setlength{\textwidth}{6in}
\setlength{\oddsidemargin}{1.25in}
\setlength{\evensidemargin}{1.25in}

\usepackage{amsthm}
\usepackage[sumlimits]{amsmath}
\usepackage{amssymb}
\usepackage{algorithm}
\usepackage{algorithmic}
\usepackage[all]{xy}
\usepackage{wasysym}
\usepackage{fullpage}
\usepackage{graphicx}
\usepackage{url}
\usepackage[numbers,comma]{natbib}
\usepackage{authblk}
\usepackage{subfigure}
\usepackage[colorlinks,linkcolor=red,citecolor=blue,urlcolor=blue]{hyperref}
\usepackage{thm-restate}

\usepackage[disable]{todonotes}

\sloppy\lineskip=0pt
\usepackage{times}
\usepackage{multitoc}
\usepackage[small]{titlesec}

\setlength{\baselineskip}{0mm}
\setlength{\parindent}{1em}
\setlength{\parskip}{0cm}
\titlespacing{\section}{0pt}{2ex}{1ex}
\titlespacing{\subsection}{0pt}{1.5ex}{1ex}
\titlespacing{\subsubsection}{0pt}{1.5ex}{0.5ex}
\linespread{0.9475}
\titlespacing{\paragraph}{%
  0pt}{
  0.5\baselineskip}{
  0.5em}%
\let\oldnormalsize\normalsize
\def\normalsize{\oldnormalsize%
\abovedisplayskip5pt plus2pt minus 2pt%
\belowdisplayskip5pt plus2pt minus 2pt}
\setlength{\tabcolsep}{2.65pt}


\newif\ifdcc\dccfalse   

\newcommand{\arxivelink}{\cite{2012arXiv1211.0587V}}

\newcounter{appcount}
\newcommand{\appsec}[1]{
     \renewcommand{\theappcount}{\Alph{appcount}}
     \refstepcounter{appcount}
     \section*{Appendix \theappcount.\ #1}
     }

\newtheorem{defn}{Definition}
\newtheorem{lem}{Lemma}
\newtheorem{thm}{Theorem}

\newtheorem{proposition}{Proposition}

\newtheorem{corollary}{Corollary}
\theoremstyle{definition}

\def\cdbar{|}

\newcommand{\cX}{\mathcal{X}}
\newcommand{\cI}{\mathcal{I}}
\newcommand{\cM}{\mathcal{M}}
\newcommand{\cS}{\mathcal{S}}
\newcommand{\cC}{\mathcal{C}}

\newcommand{\cL}{\mathcal{L}}
\newcommand{\cP}{\mathcal{P}}
\newcommand{\Part}{\mathcal{P}}

\newcommand{\cT}{\mathcal{T}}
\newcommand{\cts}{\text{\sc cts}}

\newcommand{\ptw}{\text{\sc ptw}}

\newcommand{\lad}{\text{\sc lad}}
\newcommand{\deckt}{\text{\sc dec-kt}}
\newcommand{\ladg}{\text{\sc vcw}}

\newcommand{\kt}{\text{\sc kt}}

\newcommand{\seq}{x_{1:n}}

\newcommand{\nlogprior}[1]{\Gamma_d(#1)}

\def\ss#1{\begin{scriptsize}#1\end{scriptsize}}

\begin{document}

\def\name{}
\def\email{\small\hfill\sc}
\def\addr#1{\small\it #1}

\title{
\bf\LARGE\hrule height5pt \vskip 4mm
Partition Tree Weighting
\vskip 4mm \hrule height2pt
}

\renewcommand\Authsep{~~~~}
\renewcommand\Authand{~~~~}
\renewcommand\Authands{~~~~}
\setlength{\affilsep}{0.5em}

\author[$\ddagger\dagger$]{Joel Veness}
\author[$\dagger$]{Martha White}
\author[$\dagger$]{Michael Bowling}
\author[$\dagger$]{Andr\'{a}s Gy\"{o}rgy}
\affil[$\dagger$]{\small University of Alberta, Edmonton, Canada}
\affil[$\ddagger$]{\small DeepMind Technologies}

\date{}
\maketitle

\vspace{-3.5em}
\begin{abstract}
This paper introduces the Partition Tree Weighting technique, an efficient meta-algorithm for piecewise stationary sources.
The technique works by performing Bayesian model averaging over a large class of possible partitions of the data into locally stationary segments.
It uses a prior, closely related to the Context Tree Weighting technique of Willems, that is well suited to data compression applications.
Our technique can be applied to any coding distribution at an additional time and space cost only logarithmic in the sequence length. 
We provide a competitive analysis of the redundancy of our method, and explore its application in a variety of settings.
The order of the redundancy and the complexity of our algorithm matches those of the best competitors available in the literature,
and the new algorithm exhibits a superior complexity-performance trade-off in our experiments. 
\end{abstract}

\section{Introduction}

Coping with data generated from non-stationary sources is a fundamental problem in data compression.
Many real-world data sources drift or change suddenly, often violating the stationarity assumptions implicit in many models. 
Rather than modifying such models so that they robustly handle non-stationary data, one promising approach has been to instead design meta-algorithms that automatically generalize existing stationary models to various kinds of non-stationary settings.

A particularly well-studied kind of non-stationary source is the class of piecewise stationary sources.
Algorithms designed for this setting assume that the data generating source can be well modeled by a sequence of stationary sources.
This assumption is quite reasonable, as piecewise stationary sources have been shown \citep{adak1998time} to adequately handle various types of non-stationarity. 
Piecewise stationary sources have received considerable attention from researchers in information theory \citep{Willems96,willemsPSMS97,Shamir99lowcomplexity}, online learning \citep{herbster1998,Vov99,KoRo08,deRooij09,hazan2009efficient,gyorgy2011efficient}, time series \citep{adak1998time,davis2006structural}, and graphical models \citep{fearnhead2006exact,adams2007bayesian,angelosante2011sparse}.

An influential approach for piecewise stationary sources is the universal \emph{transition diagram} technique of \citet{Willems96} for the statistical data compression setting.
This technique performs Bayesian model averaging over all possible partitions of a sequence of data. 
Though powerful for modeling piecewise stationary sources, its quadratic time complexity makes it too computationally intensive for many applications. 
Since then, more efficient algorithms have been introduced that weight over restricted subclasses of partitions~\citep{willemsPSMS97,Shamir99lowcomplexity,hazan2009efficient,gyorgy2011efficient}. 
For example, Live and Die Coding \citep{willemsPSMS97} considers only $\log t$ partitions at any particular time $t$ by terminating selected partitions as time progresses, resulting in an $O(n \log n)$ algorithm for binary, piecewise stationary, memoryless sources with provable redundancy guarantees.\footnote{All logarithms in this paper are of base $2$.}
\citet{gyorgy2011efficient} recently extended and generalized these approaches into a parametrized online learning framework that can interpolate between many of the aforementioned weighting schemes for various loss functions. We also note that linear complexity algorithms exist for prediction in changing environments \cite{herbster1998, KoRo08, Zin03}, but these algorithms work with a restricted class of predictors and are not applicable directly to the data compression problem considered here. 

In this paper we introduce the Partition Tree Weighting (\ptw) technique, a computationally efficient meta-algorithm that also works by weighting over a large subset of possible partitions of the data into stationary segments.
Compared with previous work, the distinguishing feature of our approach is to use a prior, closely related to Context Tree Weighting \cite{ctw95}, that contains a strong bias towards partitions containing long runs of stationary data.
As we shall see later, this bias is particularly suited for data compression applications, while still allowing us to provide theoretical guarantees competitive with previous low complexity weighting approaches.

\section{Background}

We begin with some terminology for sequential, probabilistic data generating sources.
An alphabet is a finite, non-empty set of symbols, which we will denote by $\cX$. 
A string $x_1x_2 \ldots x_n \in \cX^n$ of length $n$ is denoted by $x_{1:n}$.
The prefix $x_{1:j}$ of $x_{1:n}$, $j\leq n$, is denoted by $x_{\leq j}$ or $x_{< j+1}$.
The empty string is denoted by $\epsilon$.
Our notation also generalises to out of bounds indices; that is, given a string $x_{1:n}$ and an integer $m > n$, we define $x_{1:m} := x_{1:n}$ and $x_{m:n}:=\epsilon$.
The concatenation of two strings $s$ and $r$ is denoted by $sr$.

\paragraph{Probabilistic Data Generating Sources.}

A probabilistic data generating source $\rho$ is defined by a sequence of probability mass functions $\rho_n : \cX^n \to [0,1]$, for all $n\in\mathbb{N}$, satisfying the compatibility constraint that 
$\rho_n(x_{1:n}) = \sum_{y\in\cX} \rho_{n+1}(x_{1:n}y)$
for all $x_{1:n} \in \cX^n$, with base case $\rho_0(\epsilon) = 1$.
From here onwards, whenever the meaning is clear from the argument to $\rho$, the subscripts on $\rho$ will be dropped.
Under this definition, the conditional probability of a symbol $x_n$ given previous data $x_{<n}$ is defined as $\rho(x_n | x_{<n}) := \rho(x_{1:n}) / \rho(x_{<n})$ provided $\rho(x_{<n}) > 0$, with the familiar chain rules $\rho(x_{1:n}) = \prod_{i=1}^n \rho(x_i | x_{<i})$ and $\rho(x_{i:j} \cdbar x_{<i}) = \prod_{k=i}^j \rho(x_k | x_{<k})$ now following.

\paragraph{Temporal Partitions.}
Now we introduce some notation to describe temporal partitions.
A segment is a tuple $(a,b) \in \mathbb{N}\times\mathbb{N}$ with $a \leq b$.
A segment $(a,b)$ is said to overlap with another segment $(c,d)$ if there exists an $i \in \mathbb{N}$ such that $a \leq i \leq b$ and $c \leq i \leq d$.
A temporal partition $\cP$ of a set of time indices $S = \{ 1,2,\dots\,n \}$, for some $n\in\mathbb{N}$, is a set of non-overlapping segments such that for all $x\in\cS$, there exists a segment $(a,b) \in \cP$ such that $a \leq x \leq b$.
We also use the overloaded notation $\cP(a,b) := \{ (c,d) \in \cP \;:\; a \leq c \leq d \leq b \}$.
Finally, $\cT_n$ will be used to denote the set of all possible temporal partitions of $\{ 1, 2, \dots, n \}$.

\paragraph{Piecewise Stationary Sources.}
We can now define a piecewise stationary data generating source $\mu$ in terms of a partition $\cP = \left\{ (a_1,b_1), (a_2,b_2), \dots \right \}$ and a set of probabilistic data generating sources $\{ \mu^1, \mu^2, \dots \},$ such that for all $n \in \mathbb{N}$, for all $x_{1:n} \in \cX^n$,
\begin{equation*}
\mu(x_{1:n}) := 
\prod_{(a,b)\in\cP_n} \mu^{f(a)}(x_{a:b}),
\end{equation*}
where $\cP_n := \left \{ (a_i, b_i) \in \cP \,:\, a_i \leq n \right \}$ and $f(i)$ returns the index of the time segment containing $i$, that is, it gives a value $k \in \mathbb{N}$ such that both $(a_k, b_k) \in \cP$ and $a_k \leq i \leq b_k$.

\paragraph{Redundancy.}

The ideal code length given by a probabilistic model (or probability assignment) $\rho$ on a data sequence $\seq  \in \cX^n$ is given by $-\log \rho(\seq)$, and the redundancy of $\rho$, with respect to a probabilistic data generating source $\mu$, is defined as $\log \mu(\seq) - \log \rho(\seq)$. 
This quantity corresponds to the amount of extra bits we would need to transmit $x_{1:n}$ using an optimal code designed for $\rho$ (assuming the ideal code length) compared to using an optimal code designed for the data generating source $\mu$.

\section{Partition Tree Weighting}

Almost all existing prediction algorithms designed for changing source statistics are based on the transition diagram technique of \citet{Willems96}.
This technique performs exact Bayesian model averaging over the set of temporal partitions, or more precisely, the method averages over all coding distributions formed by employing a particular base model $\rho$ on all segments of every possible partition. 
Averaging over all temporal partitions (also known as transition paths) results in an algorithm of $O(n^2)$ complexity.
Several reduced complexity methods were proposed in the literature that average over a significantly smaller
set of temporal partitions \cite{willemsPSMS97,hazan2009efficient,gyorgy2011efficient}: the reduced number of partitions allows for the computational complexity to be pushed down to $O(n\log n)$, while still being sufficiently rich to guarantee almost optimal redundancy behavior (typically $O(\log n)$ times larger than the optimum, for lossless data compression). 
In this paper we propose another member of this family of methods. 
Our reduced set of temporal partitions, as well as the corresponding mixture weights, are obtained from the Context Tree Weighting (CTW) algorithm \cite{ctw95}, which results in similar theoretical guarantees as the other methods, but shows superior performance in all of our experiments. 
The method, called Partition Tree Weighting, heavily utilizes the computational advantages offered by the CTW algorithm.

We now derive the Partition Tree Weighting (\ptw) technique. 
As \ptw\ is a meta-algorithm, it takes as input a base model which we denote by $\rho$ from here onwards.
This base model determines what kind of data generating sources can be processed.

\subsection{Model Class}

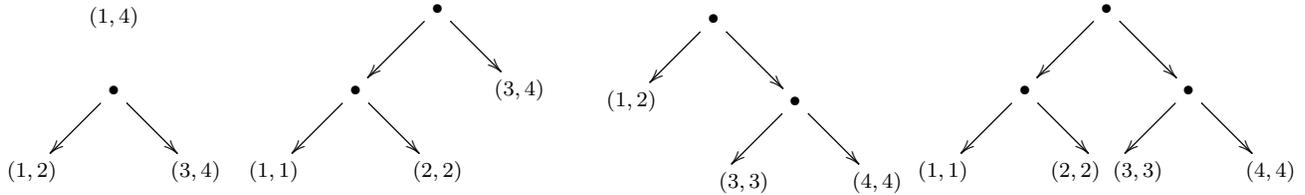
\begin{figure}
\scriptsize
\centerline{
\begin{tabular}[b]{c}
\xymatrix @ur {
(1,4)
} \\[2.5em]
\xymatrix @ur {
(1,2) & \bullet \ar[l] \ar[d] \\
& (3,4) 
}
\end{tabular}
\xymatrix @ur {
(1,1)   & \bullet \ar[l] \ar[d] & \bullet \ar[l] \ar[d] \\
& (2,2) & (3,4) \\
}
\xymatrix @ur {
& (1, 2) & \bullet \ar[l] \ar[d] \\
& \quad\quad\;\;\; (3,3) & \bullet \ar[l] \ar[d] \\
&  & (4,4)
}
\xymatrix @ur {
(1,1)   &\bullet \ar[l] \ar[d] & \bullet \ar[l] \ar[d] \\
& (2,2) \;\; (3,3) & \bullet \ar[l] \ar[d] \\
&  & (4,4)
}
}
\vspace{-8.0em}
\caption{The set $\cC_2$ represented as a collection of partition trees.}\label{fig:partition_tree}
\vspace{-0.3cm}
\end{figure}

We begin by defining the class of binary temporal partitions.
Although more restrictive than the class of all possible temporal partitions, binary temporal partitions possess important computational advantages that we will later exploit. 

\begin{defn}\label{def:binary_temporal_partitions}
Given a depth parameter $d \in \mathbb{N}$ and a time $t \in \mathbb{N}$, the set $\cC_d(t)$ of all binary temporal partitions from $t$ is recursively defined by
\begin{equation*}
\cC_d(t) := \bigl\{ \{ (t,t+2^{d}-1) \} \bigr\} \cup \left\{ \cS_1 \cup \cS_2 : \cS_1 \in \cC_{d-1} \left(t \right), \cS_2 \in \cC_{d-1} \left(t + 2^{d-1} \right)  \right\},
\end{equation*}
with $\cC_0(t) := \bigl\{  \{ (t,t) \} \bigr\}$.
Furthermore, we define $\cC_d := \cC_d(1)$. 
\end{defn}
For example, $\cC_2 =$ $\bigl\{$ $\{ (1,4) \}$, $\{ (1,2), (3,4) \}$, $\{ (1,1), (2,2), (3,4) \}$, $\{ (1,2), (3,3), (4,4) \}$, $\{ (1,1), (2,2), (3,3), (4,4) \}$ $\bigr\}$. 
Each partition can be naturally mapped onto a tree structure which we will call a partition tree.
Figure~\ref{fig:partition_tree} shows the collection of partition trees represented by $\cC_2$.
Notice that the number of binary temporal partitions $|\cC_d|$ grows roughly double exponentially in $d$.
For example, $|\cC_0| = 1$, $|\cC_1| = 2$, $|\cC_2| = 5$, $|\cC_3| = 26$, $|\cC_4|=677$, $|\cC_5| = 458330$, which means that some ingenuity will be required to weight over all $\cC_d$ efficiently.

\subsection{Coding Distribution}

\newcommand{\ptwdefinner}{2^{-\Gamma_d(\cP)} \hspace{-0.45em} \prod_{(a,b) \in \cP} \hspace{-0.35em} \rho(x_{a:b})}
\newcommand{\ptwdef}{\sum_{\cP \in \cC_d} \ptwdefinner}
We now consider a particular weighting over $\cC_d$ that has both a bias towards simple partitions and efficient computational properties.
Given a data sequence $x_{1:n}$, we define
\begin{align}\label{eq:ptw}
\ptw_d(x_{1:n}) := \ptwdef, 
\end{align}
where $\Gamma_d(\cP)$ gives the number of nodes in the partition tree associated with $\cP$ that have a depth less than $d$.
This prior weighting is identical to how the Context Tree Weighting method \citep{ctw95} weights over tree structures, and is an application of the general technique used by the class of Tree Experts described in Section $5.3$ of \cite{pred_learning_games}.
It is a valid prior, as one can show $\sum_{\cP \in \cC_d} 2^{-\Gamma_d(\cP)}=1$ for all $d \in \mathbb{N}$.
Note that Algorithm~\ref{alg:ptw} is a special case, using a prior $2^{-\Gamma_d(\cP)}$ for $\cP \in \cC_d$ and $0$ otherwise, of the class of general algorithms discussed in \cite{gyorgy2011efficient}.
As such, the main contribution of this paper is the introduction of this specific prior.
A direct computation of Equation \ref{eq:ptw} is clearly intractable.
Instead, an efficient approach can be obtained by noting that Equation \ref{eq:ptw} can be recursively decomposed.

\begin{restatable}{lem}{lemmaPtwLemma}\label{lem:ptw_lemma}
For any depth $d \in \mathbb{N}$, given a sequence of data $x_{1:n} \in \cX^n$ satisfying $n \leq 2^d$, 
\begin{equation}\label{eq:ptw_lemma}
\ptw_d(x_{1:n}) = \frac{1}{2} \rho(x_{1:n}) + \frac{1}{2} \ptw_{d-1} \left( x_{1:k} \right) \ptw_{d-1}\left(x_{k+1:n} \right),
\end{equation}
where $k=2^{d-1}$.
\end{restatable}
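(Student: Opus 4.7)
The plan is to split the sum in the definition of $\ptw_d$ given in Equation~\ref{eq:ptw} into two disjoint pieces according to the recursive structure of $\cC_d$ from Definition~\ref{def:binary_temporal_partitions}: the singleton containing the trivial partition $\{(1,2^d)\}$, and the set of non-trivial partitions, each of which decomposes uniquely as $\cS_1 \cup \cS_2$ with $\cS_1 \in \cC_{d-1}(1)$ and $\cS_2 \in \cC_{d-1}(k+1)$.

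For the trivial partition, the associated partition tree is a single root node, so $\Gamma_d(\{(1,2^d)\}) = 1$, while the out-of-bounds convention gives $\rho(x_{1:2^d}) = \rho(x_{1:n})$ since $n \leq 2^d$. This piece therefore contributes exactly $\tfrac{1}{2}\rho(x_{1:n})$, which is the first term on the right-hand side of \eqref{eq:ptw_lemma}.

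For the non-trivial piece, I would first prove two short identities. The first is the tree recurrence
$$\Gamma_d(\cS_1 \cup \cS_2) = 1 + \Gamma_{d-1}(\cS_1) + \Gamma_{d-1}(\cS_2),$$
obtained by observing that the nodes at depth $<d$ of the combined tree consist of the root plus those nodes of each subtree that sit at depth $<d-1$ when the subtree is viewed as a tree of depth $d-1$. The second is the product factorisation
$$\prod_{(a,b)\in\cP}\rho(x_{a:b}) \;=\; \prod_{(a,b)\in\cS_1}\rho(x_{a:b}) \cdot \prod_{(a,b)\in\cS_2}\rho(x_{a:b}),$$
which is immediate because $\cS_1$ and $\cS_2$ consist of segments in the disjoint ranges $\{1,\dots,k\}$ and $\{k+1,\dots,2^d\}$ respectively. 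Substituting these two facts and pulling out the factor $\tfrac{1}{2}$ turns the sum over non-trivial $\cP$ into a product of two independent sums; an index-shift bijection $(a,b)\mapsto(a-k,b-k)$ then identifies them as $\ptw_{d-1}(x_{1:k})$ and $\ptw_{d-1}(x_{k+1:n})$ respectively, since $\Gamma_{d-1}$ depends only on the tree structure and is invariant under the shift.

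The main subtlety, rather than a deep obstacle, lies in handling the out-of-bounds convention consistently: when $n \leq k$ the sequence $x_{k+1:n}$ is empty and the second factor must collapse to $\sum_{\cS_2} 2^{-\Gamma_{d-1}(\cS_2)} = 1$, which it does because every $\rho(x_{a:b})$ with $a>n$ evaluates to $\rho(\epsilon)=1$; when $k<n$ and a single segment of $\cS_2$ straddles index $n$, the natural extension of the convention ensures $\rho(x_{a:b})$ takes the same value on both sides of the factorisation. Once these boundary cases are dispatched, the recursion \eqref{eq:ptw_lemma} follows by direct substitution.
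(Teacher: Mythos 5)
Your proposal is correct and follows essentially the same route as the paper's proof: split the sum over $\cC_d$ into the trivial partition $\{(1,2^d)\}$ and the rest, use the recurrence $\Gamma_d(\cS_1\cup\cS_2)=1+\Gamma_{d-1}(\cS_1)+\Gamma_{d-1}(\cS_2)$, factor the double sum, and identify each factor (after the index shift) with $\ptw_{d-1}$. The paper phrases the argument as induction on $d$, whereas you do the final identification directly via the shift bijection and spell out the out-of-bounds boundary cases that the paper leaves implicit; the content is the same.
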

\begin{proof}
\ifdcc
A straightforward adaptation of \citep[Lemma~2]{ctw95}.
\else
A straightforward adaptation of \citep[Lemma~2]{ctw95}, see Appendix~\ref{sec:proofs}.
\fi
\end{proof}

\subsection{Algorithm}

\begin{figure}
\hspace{0em}
\centerline{
\xymatrix @ur {
(1,1)   & \mathbf{(1,2)} \ar[l]_0 \ar[d]^{\mathbf{1}} & \bf{(1,4)} \ar[l]_{\mathbf{0}} \ar[d]^1 \\
& \mathbf{(2,2)} \;\; (3,3) & (3,4) \ar[l]_0 \ar[d]^1 \\
&  & (4,4)
}
\hspace{0.25em}
\xymatrix @ur {
(1,1)   & (1,2) \ar[l]_0 \ar[d]^1 & \mathbf{(1,4)} \ar[l]_0 \ar[d]^{\mathbf{1}} \\
& (2,2) \;\; \mathbf{(3,3)} & \mathbf{(3,4)} \ar[l]_0 \ar[d]^1 \\
&  & (4,4)
}
}
\vspace{-6.5em}
\caption{Partitions updated at $t=2$ (left) and $t=3$ (right) in a depth-2 partition tree.}\label{fig:partition_tree_updates}
\end{figure}
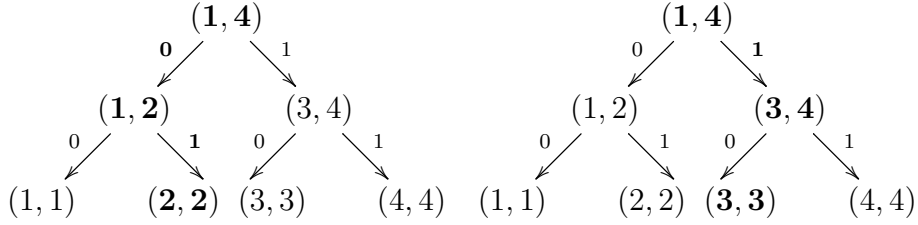

Lemma~\ref{lem:ptw_lemma} allows us to compute $\ptw_d(x_{1:n})$ in a bottom up fashion.
This leads to an algorithm that runs in $O(nd)$ time \emph{and} space by maintaining a context tree data structure in memory.
One of our main contributions is to further reduce the space overhead of \ptw\ to $O(d)$ by exploiting the regular access pattern to this data structure.
To give some intuition, Figure \ref{fig:partition_tree_updates} shows in bold the nodes in a context tree data structure that would need to be updated at times $t=2$ and $t=\nobreak3$.
The key observation is that because our access patterns are performing a kind of depth first traversal of the context tree, the needed statistics can be summarized in a stack of size $d$.
This has important practical significance, as the performance of many interesting base models will depend on how much memory is available.

Algorithm \ref{alg:ptw} describes our $O(n d)$ time and $O(d)$ space technique for computing $\ptw_d(x_{1:n})$. 
It uses a routine, $\text{{\sc mscb}}_d(t)$, that returns the most significant changed context bit; that is, for $t > 1$, this is the number of bits to the left of the most significant location at which the $d$-bit binary representations of $t-1$ and $t-2$ differ, with $\text{\sc mscb}_d(1) := 0$ for all $d\in \mathbb{N}$.
For example, for $d=5$, we have $\text{\sc mscb}_5(4) = 4$ and $\text{\sc mscb}_5(7) = 3$.
Since $d = \lceil \log n \rceil$, the algorithm effectively runs in $O(n \log n)$ time and $O(\log n)$ space.
Furthermore, Algorithm \ref{alg:ptw} can be modified to run incrementally, as $\ptw_d(x_{1:n})$ can be computed from $\ptw_d(x_{<n})$ in $O(d)$ time provided the intermediate buffers $b_i$, $w_i$, $r_i$ for $0 \leq i \leq d$ are kept in memory.

\algsetup{indent=2em}
\begin{algorithm}[t!]
\caption{{\sc Partition Tree Weighting - $\ptw_{d}(x_{1:n})$}\label{alg:ptw}}
\begin{algorithmic}[1]
\REQUIRE A depth parameter $d \in \mathbb{N}$
\REQUIRE A data sequence $x_{1:n} \in \cX^n$ satisfying $n \leq 2^d$
\REQUIRE A base probabilistic model $\rho$
\medskip
\STATE $b_j \leftarrow 1, w_j \leftarrow 1, r_j \leftarrow 1, \text{~for~} 0 \leq j \leq d$
\medskip
\FOR{$t=1$ to $n$}
	  \STATE $i \leftarrow \text{\sc mscb}_d(t)$
	  \STATE $b_{i} \leftarrow w_{i+1}$ 
		\FOR{$j=i+1$ to $d$}
		    \STATE $r_j \leftarrow t$
		\ENDFOR 	  
    \STATE $w_d \leftarrow \rho(x_{r_d:t})$
    \FOR{$i=d-1$ to $0$}
        \STATE $w_{i} \leftarrow \tfrac{1}{2} \rho(x_{r_i:t}) + \tfrac{1}{2} w_{i+1} b_{i} $
    \ENDFOR
\ENDFOR
\medskip
\RETURN $w_0$
\end{algorithmic}
\end{algorithm}

\subsection{Theoretical Properties }

We now provide a theoretical analysis of the Partition Tree Weighting method.
\ifdcc
Due to space limitations, minor results are stated without proof; the omitted arguments can be found in \arxivelink.

First, notice that using \ptw\ with a base model $\rho$ is almost as good as using $\rho$ with any partition in the class $\cC_d$.
Indeed, for all $n\in\mathbb{N}$, where $d = \lceil \log n \rceil $, for all $x_{1:n} \in \cX^n$, we have
\begin{equation*}
-\log \ptw_d(x_{1:n}) = -\log \left( \ptwdef \right)
\leq \Gamma_d(\cP) -\log \hspace{-0.45em} \prod_{(a,b) \in \cP} \hspace{-0.5em} \rho(x_{a:b}),
\end{equation*}
where $\cP$ is an arbitrary partition in $\cC_d$.
\else
Our first result shows that using \ptw\ with a base model $\rho$ is almost as good as using $\rho$ with any partition in the class $\cC_d$.
\begin{proposition}\label{prop:ptw_weighting_bound}
For all $n\in\mathbb{N}$, where $d = \lceil \log n \rceil $, for all $x_{1:n} \in \cX$, for all $\cP\in\cC_d$, we have
\begin{equation*}
-\log \ptw_d(x_{1:n}) \leq ~~ \Gamma_d(\cP) ~ + \hspace{-0.4em} \sum_{(a,b) \in \cP} -\log \rho(x_{a:b}).
\end{equation*}
\vspace{-0.5em}
\begin{proof}
We have that
\vspace{-1em}
\begin{equation*}
-\log \ptw_d(x_{1:n}) = -\log \left( \ptwdef \right)
\leq \Gamma_d(\cP) -\log \hspace{-0.8em} \prod_{(a,b) \in \cP} \hspace{-0.7em} \rho(x_{a:b}),
\end{equation*}
where $\cP$ is an arbitrary partition in $\cC_d$.
Rewriting the last term completes the proof.
\end{proof}
\end{proposition}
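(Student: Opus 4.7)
The plan is to prove the bound directly from the definition of $\ptw_d$ in Equation~\ref{eq:ptw}, by dropping all but a single, carefully chosen summand. Since every summand in the definition is a product of non-negative quantities, the entire mixture is bounded below by any one of its terms.

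Concretely, I would proceed as follows. First, expand $\ptw_d(x_{1:n})$ using Equation~\ref{eq:ptw} to obtain the weighted sum over all partitions in $\cC_d$. Next, fix the target partition $\cP \in \cC_d$ from the statement, and observe that the corresponding term $2^{-\Gamma_d(\cP)} \prod_{(a,b) \in \cP} \rho(x_{a:b})$ is non-negative, as are all other terms in the sum. Dropping every term except this one gives
\begin{equation*}
\ptw_d(x_{1:n}) \;\geq\; 2^{-\Gamma_d(\cP)} \prod_{(a,b)\in\cP} \rho(x_{a:b}).
\end{equation*}
Taking $-\log$ of both sides, which reverses the inequality since $-\log$ is monotonically decreasing, and turning the product into a sum of $-\log$ terms yields exactly the claimed bound
\begin{equation*}
-\log \ptw_d(x_{1:n}) \;\leq\; \Gamma_d(\cP) + \sum_{(a,b)\in\cP} -\log \rho(x_{a:b}).
\end{equation*}

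There is no real obstacle here; the only point worth checking is that the expression on the right is well-defined even when $n$ is not a power of two. Since $d = \lceil \log n \rceil$, we have $2^d \geq n$ and a partition $\cP \in \cC_d$ may contain segments $(a,b)$ with $b > n$. These are handled automatically by the out-of-bounds convention $x_{a:b} := x_{a:n}$ introduced in the preliminaries, so $\rho(x_{a:b})$ is meaningful and each summand in Equation~\ref{eq:ptw} remains a non-negative real. With this in place, the drop-a-term argument goes through verbatim.
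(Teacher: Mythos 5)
Your proposal is correct and matches the paper's proof: both simply lower-bound the mixture $\ptw_d(x_{1:n})$ by the single non-negative summand corresponding to $\cP$ and then take $-\log$. The extra remark about the out-of-bounds convention is a sensible sanity check but does not change the argument.
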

\fi

The next result shows that there always exists a binary temporal partition (i.e., a partition in $\cC_d$) which is in some sense close to any particular temporal partition.
To make this more precise, we introduce some more terminology.
First we define $C(\cP) := \{ a \}_{(a,b) \in \cP} \setminus \{ 1 \}$, which is the set of time indices where an existing segment ends and a new segment begins in partition $\cP$.
Now, if $C(\cP) \subseteq C(\cP')$, we say $\cP'$ is a refinement of partition $\cP$.
In other words, $\cP'$ is a refinement of $\cP$ if $\cP'$ always starts a new segment whenever $\cP$ does.
With a slight abuse of notation, we will also use $C(\cP)$ to denote the partition $\cP$.
\begin{lem}\label{lem:split_overhead}
For all $n \in \mathbb{N}$, for any temporal partition $\cP \in \cT_n$, with $d = \lceil \log n \rceil$, there exists a binary temporal partition $\cP' \in \cC_d$ such that $\cP'$ is a refinement of $\cP$ and $|\cP'| \leq |\cP| ( \lceil \log n \rceil + 1) $.
\begin{proof}
We prove this via construction.
Consider a binary tree $T_i$ with $1 \leq i \leq n$ formed from the following recursive procedure:
1. Set $a=1$, $b=2^d$, add the node $(a,b)$ to the tree.
2. If $a=i$ then stop;
otherwise, add $(a,\left \lfloor \tfrac{b-a}{2} \right\rfloor )$, $(a + \left \lfloor \tfrac{b-a}{2} \right\rfloor +1, b)$ as children to node $(a,b)$, and then set $(a,b)$ to the newly added child containing $i$ and goto step 2.

Next define $\cL(T_i)$ and $\cI(T_i)$ to be the set of leaf and internal nodes of $T_i$ respectively. 
Notice that $|\cL(T_i)| \leq d+1$ and that $\cL(T_i) \in \cC_d$.
Now, consider the set 
$$
\cP' := \left \{ (a,b) \in \bigcup\limits_{i \in C(\cP)} \cL(T_i) \,:\, (a,b) \notin \bigcup\limits_{i \in C(\cP)} \cI(T_i) \right \}.
$$
It is easy to verify that $\cP' \in \cC_d$ and that $C(\cP) \subseteq C(\cP')$.
The proof is concluded by noticing that 
\begin{equation*}
|\cP'| \leq \biggl| \bigcup\limits_{i \in C(\cP)} \cL(T_i) \biggr| \leq \sum_{i \in C(\cP)} |\cL(T_i)| 
\leq |C(\cP) | (d+1)
\leq |\cP| \left( \left\lceil \log n \right\rceil + 1\right).\qedhere
\end{equation*}
\end{proof}
\end{lem}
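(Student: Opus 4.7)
My plan is to prove the lemma constructively by associating to each change-point of $\cP$ a root-to-leaf path in a balanced binary decomposition tree over $\{1,\dots,2^d\}$, and then taking the union of those paths to build $\cP'$. The motivating picture is that $\cC_d$ is in bijection with the set of rooted subtrees of the complete balanced binary tree of depth $d$ whose internal nodes each have both children present; the segments in a partition correspond to the leaves of such a subtree.

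Concretely, I would first fix the \emph{canonical binary decomposition tree} $T^\star$ of depth $d$, whose root is $(1,2^d)$ and whose children of any internal node $(a,b)$ are the two halves $(a,a+\tfrac{b-a-1}{2})$ and $(a+\tfrac{b-a+1}{2},b)$. For each change-point $i\in C(\cP)$, let $T_i$ denote the minimal subtree of $T^\star$ that contains the root together with the leaf $(i,i)$ and, at each level of descent, also the sibling of the descended node. By construction $T_i$ is itself a valid partition tree (every internal node has both children), so its leaf set $\cL(T_i)$ belongs to $\cC_d$, and clearly $|\cL(T_i)|\le d+1$ because there are at most $d$ descent steps, each contributing one new sibling leaf and one terminal leaf at the bottom.

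Next I would glue these trees together: let $T := \bigcup_{i\in C(\cP)} T_i$, viewed as a subtree of $T^\star$. Since every $T_i$ is closed under taking siblings along its unique root-to-leaf path, the same is true of $T$, so $T$ is a valid partition tree and $\cP' := \cL(T)\setminus\cI(T)$ (equivalently, the leaves of $T$) lies in $\cC_d$. For the refinement property, note that for every $i\in C(\cP)$ the singleton segment $(i,i)$ is a leaf of $T_i$, so either $(i,i)\in\cP'$ or $i$ is the left endpoint of some larger leaf segment of $T$ containing it; in either case $i\in C(\cP')$, giving $C(\cP)\subseteq C(\cP')$.

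Finally, for the size bound, observe that any leaf of $T$ is a leaf of at least one $T_i$, so $\cP'\subseteq\bigcup_{i\in C(\cP)}\cL(T_i)$, and thus
\begin{equation*}
|\cP'| \le \sum_{i\in C(\cP)} |\cL(T_i)| \le |C(\cP)|\,(d+1) \le |\cP|\,(\lceil\log n\rceil+1),
\end{equation*}
using $|C(\cP)|\le|\cP|$. The only step requiring real care is verifying that the union $T$ is still a legitimate element of $\cC_d$ (i.e., that interior nodes of $T$ never end up with only one child): this follows automatically because membership in each $T_i$ is closed under ``if you contain a node below the root, you contain its sibling,'' and this closure property is preserved by unions. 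I expect that to be the only subtle point; the counting and the refinement inclusion are immediate from the construction.
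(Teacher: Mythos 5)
Your proposal is correct and takes essentially the same approach as the paper: associate to each change-point a path in the canonical balanced decomposition tree (keeping siblings along the way), take the union of these per-change-point subtrees, and read off the leaves; the counting $|\cP'|\le\sum_i|\cL(T_i)|\le|C(\cP)|(d+1)$ is identical. The only material difference is that you always descend all the way to the singleton leaf $(i,i)$, while the paper's construction stops descending as soon as it reaches a node whose left endpoint already equals $i$ (yielding a slightly coarser $T_i$, though with the same $d+1$ bound on its leaf count); this also makes your refinement argument simpler than you state, since $(i,i)$ is necessarily a leaf of the union $T$ (it has no children in $T^\star$), so the second branch of your ``either/or'' never arises and $i\in C(\cP')$ is immediate.
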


Next we show that if we have a redundancy bound for the base model $\rho$ that holds for any finite sequence of data generated by some class of bounded memory data generating sources, we can automatically derive a redundancy bound when using \ptw\ on the piecewise stationary extension of that same class.

\begin{thm}\label{thm:ptw_redundancy}
For all $n\in\mathbb{N}$, using \ptw\ with $d = \lceil \log n \rceil$ and a base model $\rho$ whose redundancy is upper bounded by a non-negative, monotonically non-decreasing, concave function $g~:~\mathbb{N} \to \mathbb{R}$ with $g(0)=0$ on some class ${\cal G}$ of bounded memory data generating sources, the redundancy
\begin{equation*}
\log \mu(\seq) -\log \ptw_d(\seq) \leq \nlogprior{\Part'} + |\cP| \, g\left( \left\lceil \frac{n}{|\cP| ( \lceil \log n \rceil + 1)} \right\rceil \right) ( \lceil \log n \rceil + 1), 
\end{equation*}
where $\mu$ is a piecewise stationary data generating source, and the data in each of the stationary regions $\cP \in \cT_n$ is distributed according to some source in ${\cal G}$.
Furthermore, $\Gamma_d(\cP')$ can be upper bounded independently of $d$ by $2 |\cP| \left( \left\lceil \log n \right\rceil + 1 \right)$.
\ifdcc
\begin{proof}
Combine Lemma 1 in \cite{gyorgy2011efficient} with our Lemma \ref{lem:split_overhead} and the properties of $g$, as per the proof of Theorem 2 in \cite{gyorgy2011efficient}.
See \arxivelink\ for more details, including the proof of the upper bound on $\Gamma_d(\cP')$.
\end{proof}
\end{thm}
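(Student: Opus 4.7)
The plan is to chain Proposition 1 with Lemma 2 and then aggregate the per-segment redundancies using concavity of $g$, mirroring the overall strategy of Theorem 2 in \cite{gyorgy2011efficient}. First, I would invoke Lemma 2 to obtain a binary temporal partition $\cP' \in \cC_d$ that refines the generating partition $\cP$ and satisfies $|\cP'| \leq |\cP|(\lceil \log n \rceil + 1)$. Feeding this $\cP'$ into Proposition 1 and adding $\log \mu(\seq)$ to both sides gives
\[
\log \mu(\seq) - \log \ptw_d(\seq) \leq \nlogprior{\cP'} + \sum_{(c,d) \in \cP'} \bigl[\log \mu^{f(c)}(x_{c:d} \mid x_{<c}) - \log \rho(x_{c:d})\bigr],
\]
where I have expanded $\log \mu(\seq)$ segment-wise along $\cP'$ via the chain rule, which is legitimate because each segment of $\cP'$ sits entirely inside a single stationary segment of $\cP$.

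Since each sub-segment is generated by a bounded-memory source in $\mathcal{G}$, the assumed redundancy bound on $\rho$ yields $\log \mu^{f(c)}(x_{c:d} \mid x_{<c}) - \log \rho(x_{c:d}) \leq g(d-c+1)$ for every $(c,d) \in \cP'$. Summing and applying Jensen's inequality (using concavity of $g$ and the fact that the effective segment lengths sum to $n$) gives $\sum_{(c,d) \in \cP'} g(d-c+1) \leq |\cP'|\, g(\lceil n/|\cP'|\rceil)$. Concavity of $g$ with $g(0)=0$ makes $y \mapsto g(y)/y$ non-increasing, so the map $L \mapsto L\, g(\lceil n/L \rceil)$ is non-decreasing in $L$; substituting the Lemma 2 bound on $|\cP'|$ then produces the stated redundancy bound.

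The auxiliary claim $\nlogprior{\cP'} \leq 2|\cP|(\lceil \log n \rceil + 1)$ is a combinatorial afterthought: a rooted binary tree with $L$ leaves has exactly $L-1$ internal nodes, and $\Gamma_d$ counts internal nodes plus leaves at depth below $d$, so $\nlogprior{\cP'} \leq 2|\cP'| - 1 < 2|\cP'|$; Lemma 2 then closes the gap. I expect the principal subtlety to lie in transferring the redundancy bound $g$ from its natural unconditional statement for $\rho$ to the conditional form $\mu^{f(c)}(x_{c:d} \mid x_{<c})$ required on each sub-segment, which is precisely where the bounded-memory assumption on $\mathcal{G}$ must be used carefully; the remaining ingredients (Jensen's inequality and monotonicity of $L\, g(n/L)$) are standard corollaries of concavity and $g(0)=0$.
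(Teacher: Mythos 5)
Your proposal is correct and follows essentially the same route as the paper's own proof: invoke Lemma~\ref{lem:split_overhead} to obtain a refinement $\cP'\in\cC_d$, apply Proposition~\ref{prop:ptw_weighting_bound} to $\cP'$, decompose $\log\mu$ along $\cP'$ via the chain rule and the refinement property, bound each sub-segment's redundancy by $g$, aggregate via Jensen's inequality, use the fact that $a\mapsto a\,g(b/a)$ is non-decreasing (equivalently $g(y)/y$ non-increasing, from concavity and $g(0)=0$) to substitute the Lemma~\ref{lem:split_overhead} size bound, and finally note that a full binary tree with $|\cP'|$ leaves has fewer than $2|\cP'|$ nodes in total. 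Your remark about the subtlety of transferring the redundancy bound to the conditional form on interior sub-segments via the bounded-memory hypothesis is exactly where the paper also relies on this assumption, though it leaves the step implicit as you do.
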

\else
\begin{proof}
From Lemma ~\ref{lem:split_overhead}, we know that there exists a partition $\cP' \in \cC_d$ that is a refinement of $\cP$ containing at most $|\cP| ( \lceil \log n \rceil + 1)$ segments.
Applying Proposition \ref{prop:ptw_weighting_bound} to $\cP'$ and using our redundancy bound $g$ gives
\begin{align*}
&\log \mu(\seq) -\log \ptw_d(\seq) \le \log \mu(\seq) + \nlogprior{\cP'} - \sum_{(a,b) \in \cP'}  \log \rho(x_{a:b})\\
&\hspace{0.3cm}= \nlogprior{\cP'} + \sum_{(a,b)  \in \cP} \log \mu^{f(a)}(x_{a:b}) - \sum_{(a,b) \in \cP'} \log \rho(x_{a:b})\\
&\hspace{0.3cm}= \nlogprior{\cP'} - \sum_{(a,b) \in \Part'} \log \rho(x_{a:b}) + \sum_{(a,b)  \in \cP} \sum_{(c,d)  \in \cP'(a,b)} \log \mu^{f(a)}(x_{c:d} \cdbar x_{a:c-1})\\
&\hspace{0.3cm}  
\leq  \nlogprior{\Part'}  + \sum_{(a,b) \in \Part'} g(b-a+1).
\end{align*}
Since  by definition $g$ is concave, Jensen's inequality implies
$\sum_{(a,b) \in \Part'} g(b-a+1) \le |\Part'| g(n/|\Part'|)$. Furthermore, the conditions on $g$ also imply that $a g(b/a)$ is a nondecreasing function of $a>0$ for any fixed $b>0$, and so Lemma~\ref{lem:split_overhead} implies
\begin{equation*}
 \sum_{(a,b) \in \Part'} g(b-a+1) \le
|\cP| \, g\left( \left\lceil \frac{n}{|\cP| ( \lceil \log n \rceil + 1) } \right\rceil \right) ( \lceil \log n \rceil + 1), 
\end{equation*}
which completes the main part of the proof.
The term $\nlogprior{\Part'}$ is proportional to the size of the partition tree needed to capture the locations where the data source changes.
Since at least half of the nodes of a binary tree are leaves, and the number of leaf nodes is the same as the number of segments in a partition, for any $d$ we have
$\Gamma_d(\cP') \leq 2 |\cP'| \leq 2 |\cP| \left( \left\lceil \log n \right\rceil + 1 \right)$.
\end{proof}
\end{thm}

We also remark that Theorem \ref{thm:ptw_redundancy} can be obtained by combining Lemma 1 in \cite{gyorgy2011efficient} with our Lemma \ref{lem:split_overhead} and the properties of $g$, as per the proof of Theorem 2 in \cite{gyorgy2011efficient}. However, to be self-contained, we decided to include the above short proof of the theorem. 
\fi

\paragraph{Removing the dependence on $n$ and $d$.}

Our previous results required choosing a depth $d$ in advance such that $d = \lceil \log n \rceil$.
This restriction can be lifted by using the modified coding distribution given by
$
\ptw(x_{1:n}) := \prod_{i=1}^n \ptw_{\lceil \log i \rceil}(x_{i} \cdbar x_{<i}).
$
The next result justifies this choice.
\begin{restatable}{thm}{propStronglyOnline}\label{thm:strongly_online}
For all $n \in \mathbb{N}$, for all $x_{1:n} \in \cX^n$, we have that
\begin{equation*}
-\log \ptw(x_{1:n}) \leq -\log \ptw_d(x_{1:n}) + \lceil \log n \rceil (\log 3 -1),
\end{equation*}
where $d=\lceil \log n \rceil$.
\end{restatable}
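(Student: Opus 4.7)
The plan is to unwind $\ptw(x_{1:n})$ into a telescoping product of ratios of $\ptw_d$ values and then control the mismatch that occurs each time the depth parameter $\lceil \log i \rceil$ increases by one. Writing $d_i := \lceil \log i \rceil$ and $d^* := \lceil \log n \rceil$, I would begin by rewriting
\[
\ptw(x_{1:n}) = \prod_{i=1}^n \frac{\ptw_{d_i}(x_{1:i})}{\ptw_{d_i}(x_{1:i-1})}
\]
and then grouping the factors according to the constant-depth blocks $B_0 = \{1\}$, $B_d = \{2^{d-1}+1,\dots,2^d\}$ for $1 \leq d < d^*$, and $B_{d^*} = \{2^{d^*-1}+1,\dots,n\}$. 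Within each block the conditional probabilities telescope, leaving
\[
\ptw(x_{1:n}) \;=\; \ptw_0(x_1) \cdot \Bigl[\prod_{d=1}^{d^*-1} \frac{\ptw_d(x_{1:2^d})}{\ptw_d(x_{1:2^{d-1}})}\Bigr] \cdot \frac{\ptw_{d^*}(x_{1:n})}{\ptw_{d^*}(x_{1:2^{d^*-1}})}.
\]

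The analytic heart of the argument is a single boundary estimate relating $\ptw_d$ and $\ptw_{d-1}$ at the special length $2^{d-1}$. Applying Lemma~\ref{lem:ptw_lemma} at depth $d$ to a sequence of length exactly $2^{d-1}$ forces the right half of the split to be empty (by the out-of-bounds convention $x_{m:n} := \epsilon$ when $m > n$), so
\[
\ptw_d(x_{1:2^{d-1}}) \;=\; \tfrac12\, \rho(x_{1:2^{d-1}}) + \tfrac12\, \ptw_{d-1}(x_{1:2^{d-1}}).
\]
Since $\ptw_{d-1}(x_{1:2^{d-1}}) \geq \tfrac12 \rho(x_{1:2^{d-1}})$ (from Lemma~\ref{lem:ptw_lemma} applied at depth $d-1$, or directly from $\ptw_0(x_1) = \rho(x_1)$ when $d=1$), the $\rho$ term can be absorbed to yield the crucial transition inequality
\[
\ptw_d(x_{1:2^{d-1}}) \;\leq\; \tfrac32\, \ptw_{d-1}(x_{1:2^{d-1}}).
\]

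To finish, I would substitute this estimate into the denominator of every boundary ratio in the telescoped expression. After the substitution the numerators $\prod_{d=0}^{d^*-1} \ptw_d(x_{1:2^d})$ (noting $\ptw_0(x_1) = \ptw_0(x_{1:2^0})$) exactly match the reindexed denominator $\prod_{d=1}^{d^*} \ptw_{d-1}(x_{1:2^{d-1}})$, so all intermediate factors cancel and only $(2/3)^{d^*}\, \ptw_{d^*}(x_{1:n})$ survives. Taking $-\log$ and using $\log(3/2) = \log 3 - 1$ gives the claim. I expect the main obstacle not to be analytical but combinatorial: tracking precisely which $\ptw_d(x_{1:2^d})$ factors cancel across the $d^*$ boundaries, and verifying the small edge cases ($n \leq 2$, together with the singleton block $B_0$) without off-by-one errors.
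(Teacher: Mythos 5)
Your proposal is correct and follows essentially the same route as the paper's proof: the same telescoping decomposition into constant-depth blocks, the same $\ptw_d(x_{1:2^{d-1}}) \leq \tfrac{3}{2}\ptw_{d-1}(x_{1:2^{d-1}})$ transition estimate derived from Lemma~\ref{lem:ptw_lemma}, and the same final cancellation yielding $\ptw(x_{1:n}) \geq (2/3)^d\ptw_d(x_{1:n})$. The only cosmetic difference is that the paper proves the ratio bound $\tfrac{2}{3}\ptw_{k+1}(x_{1:t}) \leq \ptw_k(x_{1:t})$ uniformly for all $t \leq 2^k$ and derives the lower bound $\ptw_k(x_{1:t}) \geq \tfrac{1}{2}\rho(x_{1:t})$ directly from the definition rather than from the lemma, but these are merely presentational choices.
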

Thus, the overhead due to not knowing $n$ in advance is $O(\log n)$. The proof of this result, given in 
\ifdcc \arxivelink, \else Appendix~\ref{sec:proofs}, \fi is based on the fact that for any $t,k \in \mathbb{N}$ satisfying $1 \le t\le 2^k$, we have
\begin{equation}
\label{eq:2/3penalty}
\tfrac{2}{3} \ptw_{k+1}(x_{1:t}) \leq \ptw_{k}(x_{1:t}),
\end{equation} 
which implies that each time the depth of the tree is increased, the algorithm suffers at most 
an extra $\log(3/2)$ penalty.

Algorithm \ref{alg:ptw} can be straightforwardly modified to compute $\ptw(x_{1:n})$ using the same amount of resources as needed for $\ptw_d(x_{1:n})$.
The main idea is to increase the size of the stack by one and copy over the relevant statistics whenever a power of two boundary is crossed.
Alternatively, one could simply pick a sufficiently large value of $d$ in advance.
This is also justified, as, based on Equation~\ref{eq:2/3penalty}, the penalty for using an unnecessarily large $k > d$ can be bounded by
\begin{equation*}
-\log \ptw_k(x_{1:n}) \leq -\log \ptw_d(x_{1:n}) + (k - d) \log \tfrac{3}{2}.
\end{equation*}

\section{Applications}

We now explore the performance of Partition Tree Weighting in a variety of settings.

\paragraph{Binary, Memoryless, Piecewise Stationary Sources.}
First we investigate using the well known \kt\ estimator \citep{krichevsky1981pue} as a base model for \ptw.
We begin with a brief overview of the \kt\ estimator. 
Consider a sequence $x_{1:n} \in \{0,1\}^n$ generated by successive Bernoulli trials.
If $a$ and $b$ denote the number of zeroes and ones in $x_{1:n}$ respectively, and $\theta \in [0,1]$ denotes the probability of observing a 1 on any given trial, then $\Pr(x_{1:n}\,|\,\theta) = \theta^b (1-\theta)^a$.
One way to construct a distribution over $x_{1:n}$, in the case where $\theta$ is unknown, is to weight over the possible values of $\theta$.
The KT-estimator uses the weighting $w(\theta) := \text{Beta($\tfrac{1}{2}$,$\tfrac{1}{2}$)} = \pi^{-1} \theta^{-1/2}(1-\theta)^{-1/2}$, which gives the coding distribution
$
\kt( x_{1:n} ) 
  := \int_0^1 \theta^b(1-\theta)^a w(\theta)\,d\theta
$
.
This quantity can be efficiently computed online by maintaining the $a$ and $b$ counts incrementally and using the chain rule, that is,
$\Pr(x_{n+1}=1|x_{1:n})=1-\Pr(x_{n+1}=0|x_{1:n}) = (b+1/2)/(n+1)$.
Furthermore, the parameter redundancy can be bounded uniformly; restating a result from \cite{ctw95}, one can show that for all $n \in \mathbb{N}$, for all $x_{1:n} \in \cX^n$, for all $\theta \in [0,1]$,
\begin{equation}\label{eq:kt_parameter_redun}
\log \theta^b (1-\theta)^a - \log \kt(x_{1:n}) \leq \tfrac{1}{2}\log(n) + 1.
\end{equation}

We now analyze the performance of the \kt\ estimator when used in combination with \ptw.
Our next result follows immediately from Theorem \ref{thm:ptw_redundancy} and Equation \ref{eq:kt_parameter_redun}.
\begin{corollary}\label{thm:ptw_kt}
For all $n \in \mathbb{N}$, for all $x_{1:n} \in \{0,1\}^n$, if $\mu$ is a piecewise stationary source segmented according to any partition $\cP \in \cT_n$, with the data in segment $i$ being generated by i.i.d. $Bernoulli(\theta_i)$ trials with $\theta_i \in [0,1]$ for $1 \leq i \leq |\cP|$, the redundancy of the \ptw-\kt\ algorithm, obtained by setting $d = \lceil \log n \rceil$ and using the \kt\ estimator as a base model, is upper bounded by
\begin{equation*}
\nlogprior{\Part'} + \frac{|\cP|}{2} \log \left\lceil \frac{n}{|\cP| ( \lceil \log n \rceil + 1)} \right\rceil ( \lceil \log n \rceil + 1) + |\cP| ( \lceil \log n \rceil + 1).
\end{equation*}
\end{corollary}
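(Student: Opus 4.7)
The plan is to obtain the corollary as a direct specialization of Theorem~\ref{thm:ptw_redundancy} to the class $\mathcal{G}$ of i.i.d.\ Bernoulli sources, with the \kt\ estimator in the role of the base model $\rho$. The only real work is to identify a redundancy function $g$ that fits the hypotheses of Theorem~\ref{thm:ptw_redundancy} and then push the substitution through.

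First, I would read off the per-segment redundancy bound for \kt. Equation~\ref{eq:kt_parameter_redun} states that for any Bernoulli parameter $\theta\in[0,1]$ and any $x_{1:n}\in\{0,1\}^n$, the parameter redundancy of \kt\ is at most $\tfrac{1}{2}\log n + 1$. I would therefore set
\begin{equation*}
g(n) := \tfrac{1}{2}\log n + 1 \quad\text{for } n\ge 1,\qquad g(0):=0,
\end{equation*}
which is a valid uniform upper bound on the redundancy of \kt\ on $\mathcal{G}$.

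Next I would verify that this $g$ satisfies the hypotheses required by Theorem~\ref{thm:ptw_redundancy}: nonnegativity is immediate; $g$ is nondecreasing since $g(1)=1\ge g(0)$ and $\tfrac{1}{2}\log n+1$ is increasing; and $g$ is concave on $\mathbb{N}$ because the successive differences $g(n+1)-g(n)=\tfrac{1}{2}\log(1+1/n)$ are nonincreasing for $n\ge 1$, while the initial gap $g(1)-g(0)=1$ is the largest, so the sequence of first differences is nonincreasing overall.

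Finally, I would plug this $g$ into the bound supplied by Theorem~\ref{thm:ptw_redundancy}, obtaining
\begin{equation*}
\nlogprior{\cP'} + |\cP|\!\left(\tfrac{1}{2}\log\!\left\lceil\tfrac{n}{|\cP|(\lceil\log n\rceil+1)}\right\rceil + 1\right)\!(\lceil\log n\rceil+1),
\end{equation*}
and distribute the factor $(\lceil\log n\rceil+1)$ to split the expression into the two stated tail terms. There is no genuine obstacle; the only point that merits care is checking the regularity conditions on $g$ near $n=0,1$, since Theorem~\ref{thm:ptw_redundancy} explicitly requires $g(0)=0$ together with monotonicity and concavity on all of $\mathbb{N}$.
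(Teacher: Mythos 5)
Your proposal is correct and takes essentially the same route as the paper, which simply notes that the corollary "follows immediately from Theorem~\ref{thm:ptw_redundancy} and Equation~\ref{eq:kt_parameter_redun}"; you have filled in the details (defining $g(n)=\tfrac12\log n+1$ for $n\ge1$, $g(0)=0$, checking non-negativity, monotonicity, and concavity via nonincreasing first differences, then substituting into the theorem's bound) accurately.
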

\noindent Corollary \ref{thm:ptw_kt} shows that the redundancy behavior of \ptw-\kt\ is $O\left(|\cP| (\log n)^2 \right)$.
Thus we expect this technique to perform well if the number of stationary segments is small relative to the length of the data.
This bound also has the same asymptotic order as previous \cite{willemsPSMS97, hazan2009efficient, gyorgy2011efficient} low complexity techniques.

Next we present some experiments with \ptw-\kt\ on synthetic, piecewise stationary data. 
We compare against techniques from information theory and online learning, including (i) Live and Die Coding (\lad) \citep{willemsPSMS97}, (ii) the variable complexity, exponential weighted averaging (\ladg) method of \cite{gyorgy2011efficient}, and (iii) the \deckt\ estimator \citep{oneil12,cts_dcc}, a heuristic variant of the \kt\ estimator that exponentially decays the $a$,$b$ counts to better handle non-stationary sources.
Figure \ref{fig:manysplits} illustrates the redundancy of each method as the number of change points increases.
To mitigate the effects of any particular choice of change points or $\theta_i$, we report results averaged over 50 runs, with each run using a uniformly random set of change points locations and $\theta_i$.  
95\% confidence intervals are shown on the graphs. 
\ptw\ performs noticeably better than all methods when the number of segments is small.
When the number of segments gets high, both \ptw\ and \ladg\ outperform all other techniques, with \ladg\ being slightly better once the number of change points is sufficiently large.
The $\ladg(g)$ technique interpolates between a weighting scheme very close to the linear method in \cite{Willems96} and \lad; however, the complexity of this algorithm when applied to the \kt\ estimator is $O(g n \log n)$, so with $g = 5$, the runtime is already 5 times larger than \ptw.

\begin{figure*}[t]
  \centering
   \subfigure[$n = 8192$ over splits $\{0,1, 2, \ldots, 20\}$]{\includegraphics[width=0.495\textwidth]{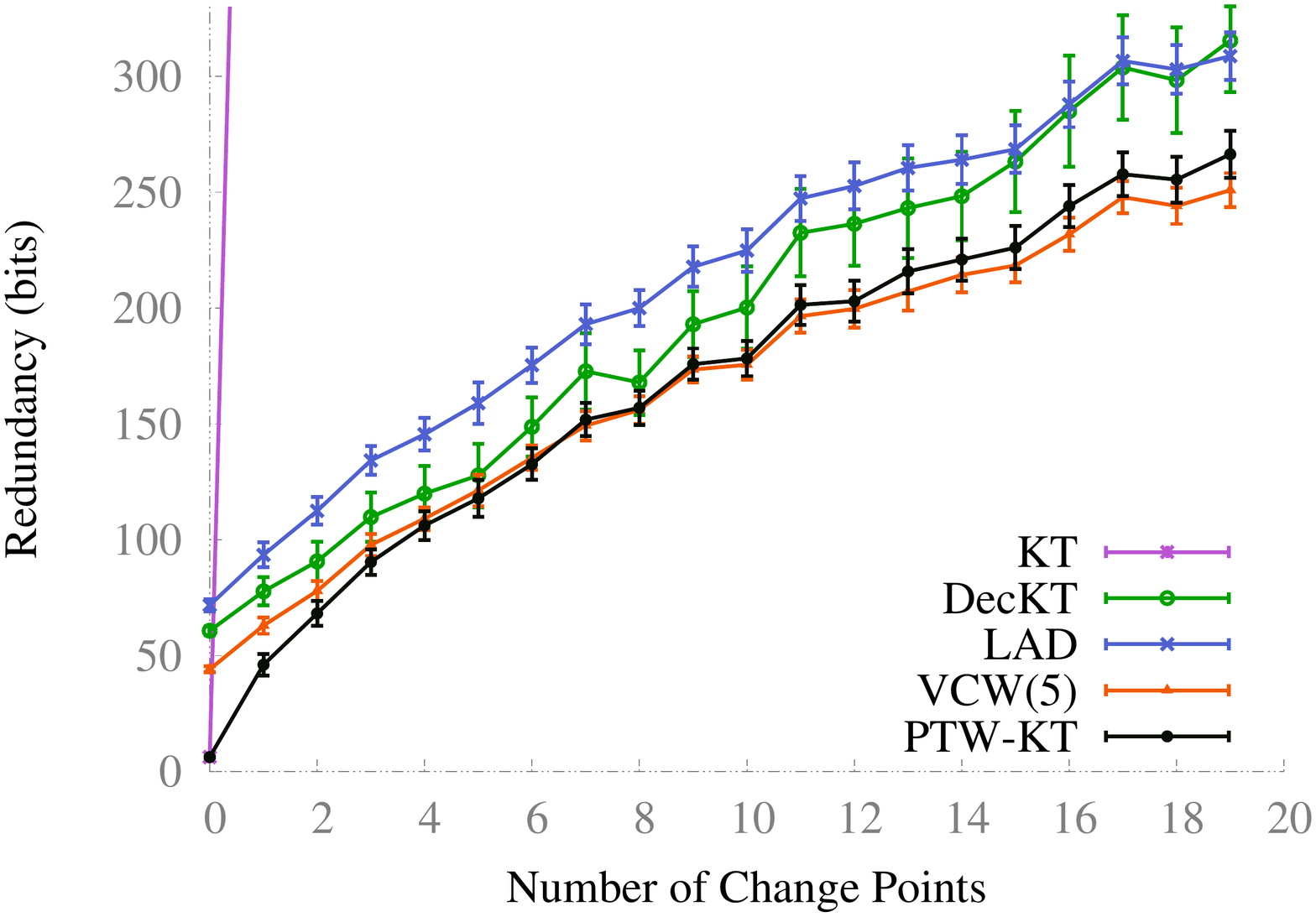}\label{fig:manysplit1}}
   \subfigure[$n = 65536$ over splits $\{0,1, 2, \ldots, 40\}$]{\includegraphics[width=0.495\textwidth]{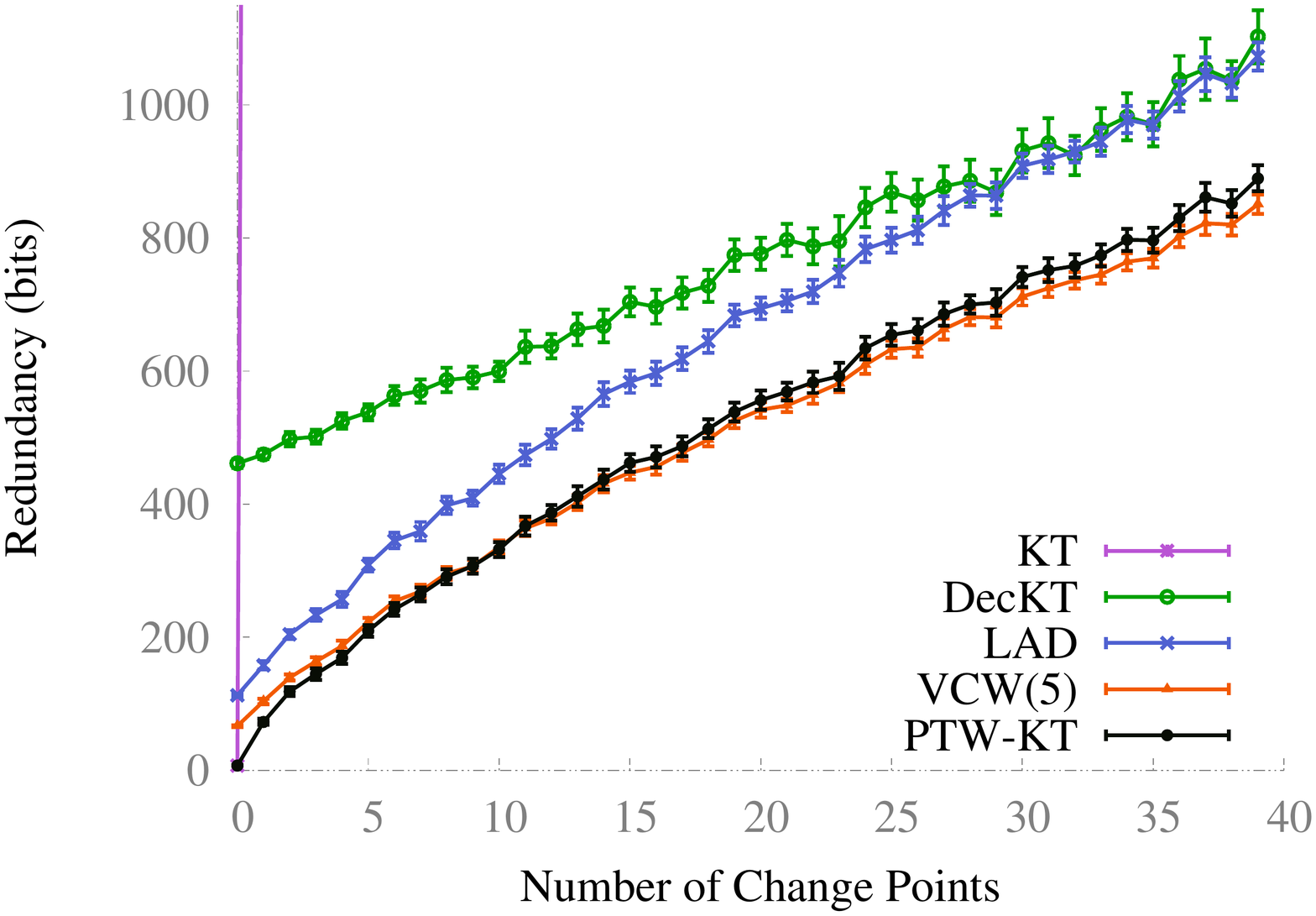}\label{fig:manysplit2}}
  \vspace{-0.5cm}
  \caption{\small Average redundancy of various estimators on binary data for increasing number of change points.}
  \label{fig:manysplits}
\end{figure*}

\begin{table}[!ht]
\centering
\begin{footnotesize}
\begin{tabular}{ | c | c | c | c | c | c | c | c | c | c | c | c | c | c | c | c | c | c | c |}
\hline
\cts~+ & \ss{bib} & \ss{book1} & \ss{book2} & \ss{geo} & \ss{news} & \ss{obj1} & \ss{obj2} & \ss{paper1} & \ss{paper2} & \ss{paper3} & \ss{paper4} & \ss{paper5} & \ss{paper6} & \ss{pic} & \ss{progc} & \ss{progl} & \ss{progp} & \ss{trans} \\ 
\hline\hline
$\deckt$ &1.78 & 2.18 & 1.88 & 4.30 & \bf{2.31} & 3.67 & 2.31 & \bf{2.25}	& \bf{2.20} & \bf{2.45} & \bf{2.75} & \bf{2.87} & \bf{2.33} & 0.78 & \bf{2.28} & 1.60 & 1.62	& 1.36\\
$\kt$ & 1.79 & \bf{2.17} & 1.89 & 4.38 & 2.32 & 3.73 & 2.39 & \bf{2.25}	& \bf{2.20} & \bf{2.45} & 2.76 & 2.89 & 2.34 & 0.79 & 2.30 & 1.61 & 1.64	& 1.37 \\
$\lad$ &2.70 & 2.60 & 2.46 & 4.37 & 3.14 & 4.52 & 3.07 & 3.33	& 3.03 &3.41 & 3.87 & 4.04 & 3.45 & 0.80 & 3.42 & 2.49 & 2.65 & 2.66 \\
$\ladg(5)$ & 2.32 & 2.41 & 2.19 & 4.27 & 2.80 & 4.13 & 2.72 & 2.90 & 2.69 & 3.02 & 3.43 & 3.56 & 3.00 & 0.78 & 2.95 & 2.11 & 2.22 & 2.16 \\
$\ptw\text{-}\kt$ & \bf{1.77} & \bf{2.17} &\bf{1.87} & \bf{4.20} &\bf{2.31} & \bf{3.64} & \bf{2.25} & \bf{2.25}&\bf{2.20} &\bf{2.45} & \bf{2.75} & 2.88  & \bf{2.33} & \bf{0.77} & 2.29 & \bf{1.59} & \bf{1.61} & \bf{1.35}\\
\hline
\end{tabular}
\end{footnotesize}
\vspace{-0.7em}
\caption{\small{Performance (average bits per byte) on the Calgary Corpus}}
\label{tbl:calgary}
\vspace{-0.2em}
\end{table}

Additionally, we evaluated the same set of techniques as a replacement to the KT estimator for the memoryless model used within Context Tree Switching (\cts) \citep{cts_dcc}, a recently introduced universal data compression algorithm for binary, stationary Markov sources of bounded memory.
Performance was measured on the well known Calgary Corpus \citep{Arnold97acorpus}.
Each result was generated using \cts\ with a context depth of 48 bits.
The results (in average bits per byte) are shown in Table \ref{tbl:calgary}. 
Here we see that \ptw-\kt consistently matches or outperforms the other methods.
The largest relative improvements are seen on the non-text files, {\sc geo}, {\sc obj1}, {\sc obj2} and {\sc pic}.
While the performance of \ladg\ could be improved by using a $g > 5$, it was already considerably slower than the other methods.

\paragraph{Tracking.}

\ptw\ can also be used to derive an alternate algorithm for tracking \citep{herbster1998} using the code-length loss.
Consider a base model $\rho$ that is a convex combination of a finite set $\cM~:=~\{ \nu_1, \nu_2, \dots, \nu_{|\cM|} \}$ of $k$-bounded memory models, that is,
\begin{equation}\label{eq:tracking_base_model} 
\rho(x_{1:n} \,|\, x_{1-k:0}) := \sum_{\nu_i \in \cM} w_{\nu_i} \nu_i(x_{1:n} \,|\, x_{1-k:0}),
\end{equation}
where $x_{1-k:0}\in\cX^k$ denotes the initial (possibly empty) context, each $\nu_i$ is a $k$-bounded memory probabilistic data generating source 
(that is, $\nu_i(x_t|x_{<t})=\nu_i(x_t|x_{t-k:t-1})$ for any $t$), $w_\nu \in \mathbb{R}$ and $w_\nu > 0$ for all $\nu \in \cM$, and $\sum_{\nu \in \cM} w_\nu~=~1$. 
We now show that applying \ptw\ to $\rho$ gives rise to a model that will perform well with respect to an interesting subset of the class of switching models.
A switching model is composed of two parts, a set of models $\cM$ and an index set.
An index set $i_{1:n}$ with respect to $\cM$ is an element of $\{ 1, 2, \dots, |\cM| \}^n$.
Furthermore, an index set $i_{1:n}$ can be naturally mapped to a temporal partition in $\cT_n$ by processing the index set sequentially, adding a new segment whenever $i_t \neq i_{t+1}$ for $1 \leq t < n$.
For example, if $|\cM| \geq 2$, the string $1111122221$ maps to the temporal partition $\{ (1,5), (6,9), (10,10) \}$.
The partition induced by this mapping will be denoted by $\cS(i_{1:n})$.
A switching model can then be defined as 
$$\xi_{i_{1:n}}(x_{1:n}) := \hspace{-0.65em} \prod_{(a,b)\in\cS(i_{1:n})} \hspace{-0.65em} \nu_{i_a}(x_{a:b} \,|\, x_{a-k:a-1}),$$
where we have adopted the convention that the previous symbols at each segment boundary define the initializing context for the next bounded memory source.\footnote{This is a choice of convenience. One could always relax this assumption and naively encode the first $k$ symbols of any segment using a uniform probability model, incurring a startup cost of $k \log |\cX|$ bits before applying the relevant bounded memory model. This would increase the upper bound in Equation~\ref{eq:tracking_redundancy} by $\left|\cS(i_{1:n})\right| ( \lceil \log n \rceil + 1) k \log |\cX|$.}
The set of all possible switching models for a sequence of length $n$ with respect to the model class $\cM$ will be denoted by $\cI_n(\cM)$.
If we now let $\tau(x_{1:n})$ denote $\ptw_{\lceil \log n\rceil}(x_{1:n})$ using a base model as defined by Equation \ref{eq:tracking_base_model}, we can \ifdcc use Theorem \ref{thm:ptw_redundancy} to~\else\fi state the following upper bound on the redundancy of $\tau$ with respect to an arbitrary switching model.
\begin{corollary}\label{thm:tracking} 
For all $n\in\mathbb{N}$, for any $x_{1:n} \in \cX^n$ and for any switching model $\xi_{i_{1:n}} \in \cI_n(\cM)$, we have
\begin{equation}\label{eq:tracking_redundancy}
-\log \tau(x_{1:n}) + \log \xi_{i_{1:n}}(x_{1:n}) \leq (2 + \kappa) \left|\cS(i_{1:n})\right| ( \lceil \log n \rceil + 1),
\end{equation}
where $\kappa := \max_{\nu \in \cM} -\log(w_{\nu})$.
\ifdcc
\else
\begin{proof}
Using 
$-\log \rho(x_{1:t} \,|\, x_{1-k:0}) = -\log \left\{ \sum_{\nu_i\in\cM} w_{\nu_i}\nu_i(x_{1:t} \,|\, x_{1-k:0}) \right\} \leq -\log w_{\nu^*} -\log \nu^*(x_{1:t} \,|\, x_{1-k:0})$
for any $\nu^* \in \cM$ and $t\in\mathbb{N}$, we see that the redundancy of $\rho$ with respect to any single model in $\cM$ is bounded by $\kappa:=\max_{\nu \in \cM} -\log(w_{\nu})$.
Combining this with Theorem \ref{thm:ptw_redundancy} completes the proof.
\end{proof}
\fi
\end{corollary}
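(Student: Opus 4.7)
The plan is to mirror the proof of Theorem \ref{thm:ptw_redundancy}, viewing $\xi_{i_{1:n}}$ as a piecewise stationary source whose partition is $\cS(i_{1:n})$ and whose within-segment source is the bounded-memory model $\nu_{i_a} \in \cM$. The only new ingredient required is a per-segment redundancy bound for the base model $\rho$ against any single $\nu^* \in \cM$; once that is in hand, the existing refinement/weighting machinery takes over.

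That bound comes from a pointwise comparison. From the definition in Equation \ref{eq:tracking_base_model}, $\rho(x_{c:d}\,|\,x_{c-k:c-1}) \ge w_{\nu^*}\, \nu^*(x_{c:d}\,|\,x_{c-k:c-1})$ for every $\nu^* \in \cM$ and every segment $(c,d)$, so
\[
-\log \rho(x_{c:d}\,|\,x_{c-k:c-1}) \le -\log w_{\nu^*} - \log \nu^*(x_{c:d}\,|\,x_{c-k:c-1}) \le \kappa - \log \nu^*(x_{c:d}\,|\,x_{c-k:c-1}).
\]
Thus $\rho$ loses at most the constant $\kappa$ relative to any member of $\cM$ on any individual segment, regardless of the segment length.

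Armed with this, I would invoke Lemma \ref{lem:split_overhead} with $\cP := \cS(i_{1:n})$ to produce a refinement $\cP' \in \cC_d$ satisfying $|\cP'| \le |\cS(i_{1:n})|(\lceil \log n \rceil + 1)$ and $\Gamma_d(\cP') \le 2|\cP'|$, and then apply Proposition \ref{prop:ptw_weighting_bound} to obtain
\[
-\log \tau(x_{1:n}) \le \Gamma_d(\cP') + \sum_{(c,d) \in \cP'} -\log \rho(x_{c:d}\,|\,x_{c-k:c-1}) \le (2+\kappa)|\cP'| - \sum_{(c,d) \in \cP'} \log \nu_{i_{a(c)}}(x_{c:d}\,|\,x_{c-k:c-1}),
\]
where $a(c)$ denotes the start of the switching segment containing $c$. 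The step I expect to require the most care is the telescoping of this last sum: since $\cP'$ refines $\cS(i_{1:n})$, each switching segment $(a,b)$ is partitioned by its children in $\cP'(a,b)$, and the $k$-bounded memory property of $\nu_{i_a}$ ensures that $\prod_{(c,d) \in \cP'(a,b)} \nu_{i_a}(x_{c:d}\,|\,x_{c-k:c-1}) = \nu_{i_a}(x_{a:b}\,|\,x_{a-k:a-1})$, so the context hand-off across internal sub-segment boundaries is consistent with the convention adopted for $\xi_{i_{1:n}}$. Summing the log of this identity over $\cS(i_{1:n})$ collapses the sum to $-\log \xi_{i_{1:n}}(x_{1:n})$, and substituting the bound on $|\cP'|$ yields the claimed inequality $(2+\kappa)|\cS(i_{1:n})|(\lceil \log n \rceil + 1)$.
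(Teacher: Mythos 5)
Your argument is correct and essentially matches the paper's: observe the constant per-segment redundancy bound $\kappa$ arising from the mixture form of $\rho$, then run the refinement of Lemma~\ref{lem:split_overhead} and the weighting bound of Proposition~\ref{prop:ptw_weighting_bound} through it --- which is precisely what the paper does by plugging $g\equiv\kappa$ into Theorem~\ref{thm:ptw_redundancy}. The only difference is that you unpack Theorem~\ref{thm:ptw_redundancy} rather than invoke it as a black box, which usefully makes explicit the telescoping over $\cP'(a,b)$ that the $k$-bounded-memory property and the segment-boundary context convention license.
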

Inspecting Corollary \ref{thm:tracking}, we see that there is a linear dependence on the number of change points and a logarithmic dependence on the sequence length.
Thus we can expect our tracking technique to perform well provided the data generating source can be well modeled by some switching model that changes infrequently.
The main difference between our method and \cite{herbster1998} is that our prior depends on additional structure within the index sequence.
While both methods have a strong prior bias towards favoring a smaller number of change points, the \ptw\ prior arguably does a better job of ensuring that the change points are not clustered too tightly together.
This benefit does however require logarithmically more time and space.

\ifdcc
\else
\section{Extensions and Future Work}

Along the lines of \citep{gyorgy2011efficient}, our method could be extended to a more general class of loss functions within an online learning framework.
We also remark that a technique similar to Context Tree Maximizing \cite{Volf94contextmaximizing} can be applied to \ptw\ to extract the best, in terms of Minimum Description Length \cite{Grunwald2007}, binary temporal partition for a given sequence of data.
Unfortunately we could not find a way to avoid building a full context tree for this case, which means that $O(n \log n)$ memory would be required instead of the $O(\log n)$ required by Algorithm \ref{alg:ptw}.
Another interesting follow up would be to generalize Theorem 3 in \citep{cts_dcc} to the piecewise stationary setting. 
Combining such a result with Corollary \ref{thm:ptw_kt} would allow us to derive a redundancy bound for the algorithm that uses \ptw-\kt\ for the memoryless model within \cts; this bound would hold with respect to any binary, piecewise stationary Markov source of bounded memory.
\fi

\section{Conclusion}

This paper has introduced Partition Tree Weighting, an efficient meta-algorithm that automatically generalizes existing coding distributions to their piecewise stationary extensions.
Our main contribution is to introduce a prior, closely related to the Context Tree Weighting method, to efficiently weight over a large subset of possible temporal partitions.
The order of the redundancy and the complexity of our algorithm matches those of the best competitors available in the literature, with the new algorithm exhibiting a superior complexity-performance trade-off in our experiments. 

\paragraph{Acknowledgments.}
The authors would like to thank Marcus Hutter for some helpful comments.
This research was supported by NSERC and Alberta Innovates Technology Futures.

{
\ifdcc
\footnotesize
\setlength{\bibsep}{0pt}
\else
\small
\fi
\bibliographystyle{plainnat} 
\bibliography{ptw}
}

\ifdcc
\else

\newpage
\appendix

\appsec{Supplementary Proofs}
\label{sec:proofs}
\vspace{2em}

\lemmaPtwLemma*
\begin{proof}
This is a straightforward adaptation of Lemma 2 from \citep{ctw95}.
We use induction on $d$.
First note that $\ptw_0(x_{1:n}) = \rho(x_{1:n})$ by definition, so the base case of $d=1$ holds trivially.
Now assume that Equation \ref{eq:ptw_lemma} holds for some depth $d-1$, and observe that
\begin{eqnarray*}
\ptw_d(x_{1:n}) 
&=& \sum_{\cP \in \cC_d(1)} 2^{-\Gamma_d(\cP)} \prod_{(i,j) \in \cP} \rho(x_{i:j}) \\
&=& \tfrac{1}{2} \rho(x_{1:n}) + \hspace{-0.5em} \sum_{\cP \in \cC_d(1) \setminus \{\{(1,2^d)\}\}} \hspace{-1.75em} 2^{-\Gamma_d(\cP)} \prod_{(i,j) \in \cP} \rho(x_{i:j}) \\
&=& \tfrac{1}{2} \rho(x_{1:n}) + \tfrac{1}{2} \hspace{-0.35em} \sum_{\substack{\cP_1 \in \cC_{d-1}(1)\\ \cP_2 \in \cC_{d-1}(k+1)} } \hspace{-1.7em} 2^{-\Gamma_{d-1}(\cP_1)-\Gamma_{d-1}(\cP_2)} \hspace{-0.5em} \prod_{(i,j) \in \cP_1} \hspace{-0.5em} \rho(x_{i:j}) \hspace{-0.5em} \hspace{-0.25em} \prod_{(r,s) \in \cP_2} \hspace{-0.5em} \rho(x_{r:s}) \\
&=& \tfrac{1}{2} \rho(x_{1:n}) +  \tfrac{1}{2} \hspace{-0em}  \left ( \sum_{\cP_1 \in \cC_{d-1}(1)} \hspace{-1.35em} 2^{-\Gamma_{d-1}(\cP_1)} \hspace{-0.3em} \prod_{(i,j) \in \cP_1} \hspace{-0.6em} \rho(x_{i:j}) \hspace{-0em} \right )\hspace{-0.45em} \left ( \sum_{\cP_2 \in \cC_{d-1}(k+1)} \hspace{-1.35em} 2^{-\Gamma_{d-1}(\cP_2)} \hspace{-0.75em} \prod_{(r,s) \in \cP_2} \hspace{-0.75em} \rho(x_{r:s}) \right ) \\
&=& \frac{1}{2} \rho(x_{1:n}) + \frac{1}{2} \ptw_{d-1} \left( x_{1:k} \right) \ptw_{d-1}\left(x_{k+1:n} \right)
\end{eqnarray*}
where $k := 2^{d-1}$.
The third step uses the property 
$\Gamma_{d}(\cP_1 \cup \cP_2) = \Gamma_{d-1}(\cP_1) + \Gamma_{d-1}(\cP_2) + 1$, 
which holds when $\cP_1 \in \cC_{d-1}(1)$ and $\cP_2 \in \cC_{d-1}(k+1)$.  
The final step applies the inductive hypothesis.
\end{proof}

\propStronglyOnline*
\begin{proof}
We begin by showing that for any $t \le 2^k$ we have
\begin{equation}
\label{eq:ptwratio}
\tfrac{2}{3}  \ptw_{k+1}(x_{1:t}) \leq \ptw_{k}(x_{1:t}).
\end{equation}
Using Lemma \ref{lem:ptw_lemma}, 
\begin{equation}\label{eq:ptw_lemma_twist}
\ptw_{k+1}(x_{1:t}) 
= \frac{1}{2} \rho(x_{1:t}) + \frac{1}{2} \ptw_{k} \left( x_{1:2^k} \right) \ptw_{k}\left(x_{2^k+1:2^{k+1}} \right) 
= \frac{1}{2} \rho(x_{1:t}) + \frac{1}{2} \ptw_{k} \left( x_{1:t} \right)
\end{equation}
holds for $t \leq 2^k$.
On the other hand,
\begin{equation*}
\ptw_k(x_{1:t}) 
= \sum_{\cP\in \cC_k} 2^{-\Gamma_k(\cP)} \hspace{-0.4em} \prod_{(a,b)\in\cP} \hspace{-0.4em} \rho(x_{a:b})
= \frac{1}{2}\rho(x_{1:t}) + \hspace{-1.5em} \sum_{\cP\in \cC_k \setminus \{\{ (1,2^k) \}\} } \hspace{-1.5em} 2^{-\Gamma_k(\cP)} \hspace{-0.4em} \prod_{(a,b)\in\cP} \rho(x_{a:b}) 
\geq \tfrac{1}{2}\rho(x_{1:t}),
\end{equation*}
which, together with Equation \ref{eq:ptw_lemma_twist} proves Equation
\ref{eq:ptwratio}. The latter allows us to derive a lower bound on $\ptw(x_{1:n})$, by noting that
\begin{eqnarray*}
\ptw(x_{1:n}) &=& \prod_{i=1}^n \ptw_{\lceil \log i \rceil}(x_{i} \cdbar x_{<i}) \\
&=& \ptw_0(x_1) \left(\prod_{a=1}^{d-1} \ptw_a(x_{2^{a-1}+1:2^a} \cdbar x_{1:2^{a-1}}) \right) \ptw_d(x_{2^{d-1}+1:n} \cdbar x_{1:2^{d-1}}) \\
&=& \ptw_0(x_1) \left(\prod_{a=1}^{d-1} \frac{\ptw_a(x_{1:2^a})}{\ptw_a(x_{1:2^{a-1}})} \right) \frac{\ptw_d(x_{1:n})}{\ptw_d(x_{1:2^{d-1}})} \\
&=& \ptw_d(x_{1:n}) \prod_{a=1}^{d} \frac{\ptw_{a-1}(x_{1:2^{a-1}})}{\ptw_a(x_{1:2^{a-1}})} \\
&\ge& \left( \tfrac{2}{3} \right)^d \ptw_d(x_{1:n}).
\end{eqnarray*}
Hence, taking the negative logarithm of both sides we obtain
\[
-\log \ptw(x_{1:n}) 
\le -\log \ptw_d(x_{1:n}) + d \log(\tfrac{3}{2}) 
= -\log \ptw_d(x_{1:n}) + \lceil \log n \rceil (\log 3 -1).
\]
\end{proof}

\fi

\end{document}